\documentclass[10pt, conference, compsocconf]{IEEEtran}
\usepackage[T1]{fontenc}
\usepackage{multirow}
\usepackage{url}
\usepackage{amsmath}
\usepackage{amssymb}
\usepackage{amsthm}
\usepackage{chngcntr}
\usepackage[usenames,dvipsnames]{xcolor}
\usepackage{subfigure}
\usepackage{tikz}
\usepackage{pgflibraryshapes}
\usetikzlibrary{shapes,automata,arrows,petri,calc,positioning}
\usetikzlibrary{decorations.pathreplacing,decorations.pathmorphing}
\usepackage[font=small]{caption}
\usepackage{graphicx}
\graphicspath{{fig}}

\newtheorem{theorem}{Theorem} 

\newcommand{\ignore}[1]{}

\usepackage[algoruled,nofillcomment,lined,longend,inoutnumbered,linesnumbered,commentsnumbered]{algorithm2e}

\pagestyle{plain}

\title{Parallel Galton Watson Process}

\author{Olivier Bodini, Camille Coti, and Julien David
\thanks{ LIPN, CNRS UMR 7030, Universit\'e Paris 13, Sorbonne Paris Cit\'e}
\thanks{
99, avenue Jean-Baptiste Cl\'ement, F-93430 Villetaneuse, FRANCE
%\email{\{firstname.lastname\}@lipn.univ-paris13.fr}
}

}

\author{\IEEEauthorblockN{Olivier Bodini, Camille Coti, and Julien David\footnote{O. Bodini and J. David are supported by french project ANR MetaConC, 2015-2019.}}
\IEEEauthorblockA{LIPN, CNRS UMR 7030, Universit\'e Paris 13, \\
Sorbonne Paris Cit\'e\\
Villetaneuse, France\\
% Email: olivier.bodini@lipn.fr, camille.coti@lipn.fr, julien.david@lipn.fr}
\url{\{firstname.lastname\}@lipn.univ-paris13.fr}}
}

\begin{document}

\maketitle

\begin{abstract}
In this paper, we study a parallel version of Galton-Watson processes for the random generation of tree-shaped structures. Random trees are
useful in many situations (testing, binary search, simulation of physics 
phenomena,...) as attests more than 49000 citations on Google scholar. Using standard analytic combinatorics, we first give a theoretical, average-case study of the random process in order to evaluate how parallelism can be
extracted from this process, and we deduce a parallel generation
algorithm. Then we present how it can be implemented in a task-based
parallel paradigm for shared memory (here, Intel Cilk). This
implementation faces several challenges, among which efficient,
thread-safe random bit generation, memory management and algorithmic
modifications for small-grain parallelism. Finally, we evaluate the
performance of our implementation and the impact of different choices
and parameters. We obtain a significant efficiency improvement for the
generation of big trees. We also conduct empirical and theoretical
studies of the average behaviour of our algorithm. 
\end{abstract}

\section{Introduction}
\label{sec:intro}

Branching processes are very simple and natural procedures that models evolution of individuals.
Such a process is extremely popular and emerges in a lot of situations; for example, in epidemiology, where individuals correspond to bacteria or in genealogy, (the initial study of Galton-Watson was the spread of surnames), but also in physics, to simulate the propagation of neutrons during a nuclear fission. 

In this paper, we focus on two standard branching processes. Firstly, on the most common Galton-Watson process which corresponds to halt with probability $\frac{1}{2}$ or generation of $2$ sons with probability $\frac{1}{2}$. 
This process appears in computer science as a good model of random rooted planar binary trees, indeed the tree obtained by writing the genealogy of the offspring conditioned to have
a fixed number $n$ of individual is known to be uniform over the set of all binary trees of size $n$.

Secondly on the process that generates uniformly rooted planar binary increasing trees. This choice is motivated by the fact that increasing trees are central data structures and arise 
in a huge number of important algorithms such as binary search algorithms, image segmentation, natural language processing, $\ldots$

It is therefore of crucial interest to be able to sample very large branching processes efficiently. For instance, nuclear simulations need very huge sampling. Now, let us observe that this process are intrinsically parallel, due to independence of each branch. But quite surprisingly, this paper is the first attempt to describe a parallel version and to the best of our knowledge, there are currently few research results on this subject~\cite{Bressan2013} (note that we do not include random numbers as combinatorial objects). For the sequential version, let us notice that the problem can be reduced to Boltzmann sampling~\cite{DFLS04,BLR15} and in the size-conditioned case has been tackled by~\cite{R85,BBJ14} for binary trees and \cite{Marchal} for increasing binary trees.
In this paper,  we give and study the first parallel algorithm that produces binary trees and increasing binary trees with a quasi-perfect distribution of load over the cores of a multi-core processor. Up to our knowledge, this paper is the first example of analyzing in distribution in the domain of parallel algorithms. 

Although Galton-Watson processes seem to be easy to parallelize, in practice a parallel implementation is really not trivial. As a matter of fact, average-case analysis results presented in this paper show that parallelism is fine-grained: the average work done by each thread is constant (or logarithmic in the increasing case) and parametrized by a threshold. This threshold characterizes a time when it is necessary that some works are given to another thread. 

This paper is organized in three main parts.
Section~\ref{sec:alg} describes an algorithm that parallelizes the Galton Watson Processes, with some small but important implementation details.
Section~\ref{sec:compan} contains a short analysis of the main parameters of this algorithm, that is the lifetime of the threads, the peak load and the total time in fully parallel model. 
This part deals with conventional analytic combinatorics. We show that, considering an ideal context in which $\sqrt{n}$ threads can run in parallel, our algorithm can sample
a $n$-node Galton Watson tree in $\Theta(\sqrt{n})$ average time complexity.
In section~\ref{sec:implem} we present implementation details that
helped us making efficient such a fine-grain parallelism.

\section{Algorithms}\label{sec:alg}

The classical and naive implementation of a Galton Watson process can be found in Algorithm~\ref{alg:naive}.
Though this method and its parallel version are not efficient,
the author thought it would improve readability to recall its
description in order to compare
this version to the improved ones.
When processing a node, the algorithm generates a random bit. If it is equal to $0$, the node is a leaf. If it is equal to $1$,
the node is internal and has two subtrees. 

\begin{algorithm}[h!]
     %   \linesnumbered  
  \KwData{a node $n$}
  \KwResult{A binary tree enrooted in $n$}
  
  {\small
      $b \leftarrow $ draw a random bit\;
      \If{$b = 1$}{
        Add nodes $n_1$ and $n_2$ as node $n$'s children\;
        $NaiveGaltonWatson(n_1)$\;
        $NaiveGaltonWatson(n_2)$\;
      }
  } 
  \caption{NaiveGaltonWatson \label{alg:naive}}
\end{algorithm}

As one can see, it is as simple to implement as it is inefficient. Double recursivity ensures to obtain a lot of movements on the call stack.
Also, since Galton-Watson processes are branching ones, a natural way to obtain a parallel algorithm is the following:
when a new node is created, its two subtrees are managed by two different threads (the main one and a new one). 
This method seems to be inefficient: as we will prove later in section~\ref{sec:compan}, the new thread immediately stops with probability $\frac{1}{2}$ 
(the subtree is a leaf), and produces a small subtree of size $4$ in average.

A proper sequential implementation requires an iterative version of this algorithm~\ref{alg:seq2}, using a stack of nodes instead of the call stack.
As we will see in benchmark section, this already drastically improves execution time. 

\begin{algorithm}[h!]         
     %   \linesnumbered  
     %     \KwData{a threshold $t$}
  \KwResult{A binary tree}
  
  {\small
    $lds_1\leftarrow $ Create a linear data structure\;
    Create a tree with root $r$\;
    Push $r$ into $lds_1$\;
    \While{$lds_1$ is not empty}{
      node $n \leftarrow pop(lds_1)$\;
      $b \leftarrow $ draw a random bit\;
      \If{$b = 1$}{
        Add nodes $n_1$ and $n_2$ as node $n$'s children\;
        Push $n_1$ and $n_2$ into $lds_1$\;
        
      }
    }
  }
\caption{IterativeGaltonWatson \label{alg:seq2}}
\end{algorithm}                

The idea of our algorithm is the following: parallel computation can improve efficiency but some of its aspects might have an overhead on the computation time.
In a multi-threaded environment, waking up a sleeping thread or creating a new thread to perform part of the computation can indeed cost some time. 
Thus, one needs to make sure that the new called thread will not halt too quickly.

In order to compute in parallel efficiently, a thread should be spawned only if it has ``enough'' work to do, in a sense that we want to improve the average size of the generated subtree which is handled by a thread.
Our idea is to use a data structure to accumulate nodes to process and spawn a new thread when it reaches a sufficient size, meaning that enough nodes are to be processed by this thread.

In Algorithm~\ref{alg:pgw}, we use two linear data structures $lds_1$ and $lds_2$ to keep track of the nodes that have to be processed.
There is two advantages in doing so. First it allows us to obtain a version of the algorithm which is theoretically iterative inside a thread, which is faster than a recursive version.
Then, it allows us to solve the aforementioned problem. 
New nodes to process are pushed in $lds_1$, until its size reaches a given threshold $t$.
New nodes are then pushed in $lds_2$. When it reaches size $t$, a new thread is spawned. This thread will manage the nodes gathered in $lds_2$.

\begin{algorithm}[ht!]
  \KwData{a linear data structure $lds_1$, a threshold $t$}
  \KwResult{A binary tree}
  
  {\small
    $lds_2\leftarrow $ Create a linear data structure\;
    %    Create a tree with root $r$\;
    %Push $r$ into $lds_1$\;
    \While{$lds_1$ is not empty}{
      \If{$lds_2$ is empty}{
        node $n \leftarrow pop(lds_1)$\;
      }
      \Else{
        node $n \leftarrow pop(lds_2)$\;
      }
      $b \leftarrow $ draw a random bit\;
      \If{$b = 1$}{
        Add nodes $n_1$ and $n_2$ as node $n$'s children\;
        \If{$|lds_1|<t$}{
          Push $n_1$ and $n_2$ into $lds_1$\;
        } 
        \Else{
          Push $n_1$ and $n_2$ into $lds_2$\;
          \If{$|lds_2|>=t$}{
            Start $ParallelGaltonWatson(lds_2,t)$ on a new thread\;
            $lds_2\leftarrow $ Create a new linear data structure\;
          }
        }
      }
    }
  }
\caption{ParallelGaltonWatson (first called with a linear data structure containing the tree's root)\label{alg:pgw}}
\end{algorithm}

Note that the size of $lds_1$ is at most increased by $1$ at each iteration of the while loop.
Therefore, when the algorithm starts filling $lds_2$, we have $|lds_1|=t$.

\subsection{Hybrid Algorithm}
\label{algo:hybrid}

Algorithm~\ref{alg:pgw} is slower than Algorithm~\ref{alg:seq2} when small objects are generated.
Indeed, 
the parallel version requires additional data structures whose cost is not negligible if the objects are small.
Therefore, we decided to add an algorithm which is a merge of Algorithm~\ref{alg:pgw} and Algorithm~\ref{alg:seq2}: at first the tree
is generated sequentially, and once the linear data structure reaches a given size the program switches to the parallel implementation.

\section{Complexity Analysis \label{sec:compan}}

In this section, we deal with basic analytic combinatorics~\cite{Flajolet2009} in order to produce some results on the behavior of this algorithm. The first parameter that we would like to analyze is the peak total load of the processors, that is to say the maximum number of nodes in current treatment in all the threads. This analysis uses standard results on the maximum height of a Dyck path. We can easily deduce from this result that the peak load for the sampling of a tree of size $n$ is in average in $O(\sqrt{n})$. We give more details later in this section. The second one is the time of the process assuming that we have a massive parallel computation. That is to say that all thread operates in parallel from all the others. In this model, we prove in the sequel that our algorithm runs in average in $O(\sqrt{n})$. The third significant parameter is the average lifetime of the first thread. We restrict here our attention to the case where the threshold is equal to 1, 2 or 4, the complete analysis being laborious and out of the scope of this introducing paper. The first thread has the property to have in average the largest lifetime. It is a natural mean upper bound for all the other threads. Moreover, as we can see in the sequel, its mean lifetime is asymptotically constant, and consequently the mean lifetime is asymptotically the mean lifetime of almost all the threads. This ensures a good distribution of the load.

\subsection{Peak total load}
First, we want to recall the studied model. We begin with a queue containing one node, this node leave the queue and with probability 1/2 generate zero or two new nodes. 
We want to analyze the evolution of this linear data structure given the fact that we know that the process stops (the queue becomes empty) after having generated $n$ nodes. 
This type of question arises in numerous situations, from statistical physics to urn process. 
It is a well known and classically called \emph{one-dimensional Brownian excursion}. In particular, the peak load $P$ is the maximum height of this excursion. It follows a Theta distribution (see~\cite{Flajolet2009} pp328, for definitions and details):

\begin{theorem}
The peak load $P$ of our algorithm follows  after normalization by $\dfrac{1}{2\sqrt{\pi n}}$ a Theta law with expectation $\sqrt{\pi n}.$
\end{theorem}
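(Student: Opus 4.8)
The plan is to identify the linear data structure's load process with the height of a Dyck path (equivalently, a discrete Brownian excursion) and then invoke the known limit law for the maximum of such a path. First I would make the correspondence precise: starting from a queue of size one, each step pops one node and, with probability $1/2$ each, pushes either zero or two nodes, so the queue size changes by $+1$ or $-1$ at each step and the process is conditioned to return to $0$ after exactly the $n$ steps that generate all $n$ nodes. This is exactly an excursion of a $\pm 1$ random walk of length $2n$ (or, after the standard bijection between such walks and binary trees, the height profile of a uniformly random binary tree with $n$ internal nodes). The peak load $P$ is by definition $\max_k X_k$ over this excursion.

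Next I would recall the classical result on the maximum height of a Dyck path / Brownian excursion, as found in Flajolet--Sedgewick (the reference already cited, pp.~328 and following): the height of a random Dyck path of semilength $n$, suitably normalized, converges to the theta distribution, whose density is expressed through $\sum_{k} (\dots) e^{-k^2 \pi^2 / (2x^2)}$-type series. The key quantitative facts I need from that source are (i) the normalizing scale is of order $\sqrt{n}$, and in fact the statement of the theorem fixes it as $\tfrac{1}{2\sqrt{\pi n}}$, and (ii) the mean of the limiting theta law, together with a uniform-integrability / moment-convergence statement, gives $\mathbb{E}[P] \sim \sqrt{\pi n}$. I would cite the moment computation for the height of random trees (e.g. the Flajolet--Odlyzko style singularity analysis, or the de Bruijn--Knuth--Rice result on the average height of planted plane trees, which gives precisely $\sqrt{\pi n}$ asymptotically) to justify the claimed expectation rather than re-deriving it.

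So the skeleton is: (1) reduce the algorithm's load to a conditioned $\pm1$ walk / Dyck path; (2) observe $P = $ its maximum; (3) quote the theta limit law with scale $\tfrac{1}{2\sqrt{\pi n}}$ from \cite{Flajolet2009}; (4) quote the asymptotic mean $\sqrt{\pi n}$ for that height from the same circle of results. The main obstacle is not any single hard estimate but making sure the reduction in step (1) is genuinely exact for Algorithm~\ref{alg:pgw}: the algorithm splits pushes between $lds_1$ and $lds_2$ and spawns threads, so one must argue that the \emph{total} number of nodes currently being processed across all threads equals the height of the single underlying excursion regardless of how nodes are partitioned among data structures — i.e. that the thread-spawning bookkeeping does not change the combined queue size, only how it is distributed. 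Once that invariant is stated cleanly, the rest is a direct appeal to the cited literature on Brownian excursion heights.
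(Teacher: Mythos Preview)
Your proposal is correct and matches the paper's approach exactly: the paper also reduces the total queue process to a one-dimensional Brownian excursion (Dyck path), identifies $P$ with its maximum height, and defers to \cite{Flajolet2009}, p.~328, for the Theta limit law and the $\sqrt{\pi n}$ expectation. If anything, you are more careful than the paper, since you explicitly flag the invariant that the combined size of all the $lds$ structures across threads coincides with the height of a single underlying excursion---a point the paper asserts implicitly but does not spell out.
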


\subsection{Time complexity in fully parallel model. }
\label{sec:anal:time}

We just analyze two cases depending on the threshold is one or two. We have two reasons for this restriction. Firstly, experimentally, these both cases are the must convenient. Secondly, up to two, the analysis is much more tricky and cannot include in this conference version. Thirdly, due to universality of the parameters, we cannot expected great changes of behavior between one and two and the next values.

So, for threshold one and two, the time complexity problem reduces to very standard question. Indeed, in case of threshold one, it is a simple observation, that every node at level $h$ (by convention, the root is at level 0) is treated after $h$ operations. Thus, the time complexity is just the height of the generated tree. This very standard problem has been tacked by De Bruijn, Knuth, Rice~\cite{BrKnRi72,FGOR93}. We come back to the same distribution that for peak load:

\begin{theorem}
The time complexity $T$ of our algorithm with threshold 1 in a fully parallel model follows  after normalization by $\dfrac{1}{2\sqrt{\pi n}}$ a Theta law with expectation $\sqrt{\pi n}.$
\end{theorem}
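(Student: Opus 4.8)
The plan is to reduce the statement to the classical result on the expected height of a random binary tree under the critical Galton--Watson model (equivalently, a uniformly random plane binary tree with $n$ internal nodes), and then invoke the known limiting distribution of that height. First I would make precise the reduction announced in the text: when the threshold is $1$, the data structure $lds_2$ is never used in a way that changes the order in which nodes are processed level by level --- each node at depth $h$ becomes available only after its parent at depth $h-1$ has been processed, and in the fully parallel model every node at a fixed level is handled simultaneously in one time step. Hence the total number of parallel steps $T$ equals the height $H_n$ of the generated tree (the maximal depth of a node), conditioned on the tree having size $n$. I would state this as a lemma and prove it by a straightforward induction on the levels, noting that the root is processed at step $0$ and that processing any node at level $h$ enqueues its (at most two) children, which are then processed at step $h+1$.

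Next I would recall that, conditioned on total size $n$, the tree produced by the critical Galton--Watson process with offspring distribution $\tfrac12\delta_0+\tfrac12\delta_2$ is uniform over plane binary trees of that size; this is exactly the classical fact already cited in the introduction of the paper. Therefore $T$ has the same law as the height of a uniform random binary tree of size $n$. I would then cite De~Bruijn--Knuth--Rice~\cite{BrKnRi72} and Flajolet--Gao--Odlyzko--Richmond~\cite{FGOR93}, who established that this height, suitably normalized, converges to the same Theta law that governs the maximum of a Brownian excursion --- precisely the distribution appearing in Theorem~2 above (the peak load), since both quantities are encoded by the maximum of the Dyck-path excursion associated with the tree. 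Concretely, the height $H_n$ satisfies $\mathbb{E}[H_n]\sim\sqrt{\pi n}$ and $\frac{1}{2\sqrt{\pi n}}H_n$ converges in distribution to the Theta law, which yields the claimed normalization and expectation for $T$.

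The routine part is assembling the normalization constants so that the statement matches verbatim the form used for the peak load; I would do this by observing that the height and the peak height of the excursion have the same scaling limit (the maximum of a normalized Brownian excursion), so the constant $\tfrac{1}{2\sqrt{\pi n}}$ and the asymptotic mean $\sqrt{\pi n}$ carry over unchanged. The main obstacle --- really the only non-bookkeeping point --- is justifying carefully that the fully parallel execution time is \emph{exactly} the tree height and not merely of the same order: one must check that with threshold $1$ no node is ever delayed by the bookkeeping between $lds_1$ and $lds_2$, i.e. that $lds_2$ stays empty (since $|lds_1|\ge 1=t$ is false only at the very start, a brief case analysis of Algorithm~\ref{alg:pgw} is needed to confirm the processing order is the level-order traversal). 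Once that is nailed down, the theorem follows immediately from the cited height asymptotics.
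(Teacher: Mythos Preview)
Your proposal is correct and follows the same route as the paper: reduce $T$ to the height of the conditioned Galton--Watson tree and then invoke De~Bruijn--Knuth--Rice and Flajolet--Gao--Odlyzko--Richmond for the Theta-law limit, noting it coincides with the peak-load distribution. The paper is in fact terser than you are---it simply asserts that a node at level $h$ is treated after $h$ operations and cites the height literature---so your care about the $lds_1/lds_2$ bookkeeping already exceeds what the paper supplies.
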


The next result about threshold 2 is just a remark. Indeed, every node at level $h$ is treated after $2h-1$ or $2h$ operations. This implies:

\begin{theorem}
The time complexity $T$ of our algorithm with threshold 2 in a fully parallel model follows after normalization by $\dfrac{1}{\sqrt{\pi n}}$ a Theta law with expectation $\sqrt{\pi n}.$
\end{theorem}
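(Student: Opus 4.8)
The plan is to reduce the threshold-2 case to the already-established threshold-1 result (Theorem on time complexity with threshold 1), which says that the height $H_n$ of a random binary tree of size $n$, after normalization by $\frac{1}{2\sqrt{\pi n}}$, converges to a Theta law with expectation $\sqrt{\pi n}$. The key structural observation is the one already quoted in the text: with threshold $2$, the data structure $lds_1$ holds up to $2$ nodes, so processing proceeds essentially two nodes per ``level'' of the tree rather than one. Concretely, I would first make precise the claim that a node at level $h$ is treated after $2h-1$ or $2h$ operations by tracing the execution of Algorithm~\ref{alg:pgw} with $t=2$ on the main thread: since $lds_1$ (resp. $lds_2$) can buffer two nodes, the ``frontier'' being processed advances by consuming two siblings before descending, so the number of pop operations before a given node at depth $h$ is processed lies in $\{2h-1, 2h\}$. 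This sandwiching is the heart of the argument.

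From that observation the rest is a squeezing/scaling argument. Let $T_n$ be the time complexity (number of operations on the critical path) for threshold $2$ and $H_n$ the tree height. The observation gives the deterministic bound $2H_n - 1 \le T_n \le 2H_n$, hence $|T_n - 2H_n| \le 1$. Since $H_n$, after dividing by $2\sqrt{\pi n}$, converges to a Theta-distributed random variable $\Theta$, the variable $T_n$ divided by $\sqrt{\pi n}$ equals $\frac{T_n}{2\sqrt{\pi n}}\cdot 2 = \left(\frac{H_n}{2\sqrt{\pi n}} + O\!\left(\frac{1}{\sqrt{n}}\right)\right)\cdot 2$, which converges in distribution to $2\Theta$. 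But $2\Theta$ is exactly the Theta law renormalized: scaling a Theta variable by $2$ simply changes the normalizing constant from $\frac{1}{2\sqrt{\pi n}}$ to $\frac{1}{\sqrt{\pi n}}$, and scales the expectation from $\sqrt{\pi n}$ to $2\sqrt{\pi n}$. Wait --- I need to check the normalization bookkeeping carefully, since the statement claims expectation $\sqrt{\pi n}$ after normalization by $\frac{1}{\sqrt{\pi n}}$, i.e. $\mathbb{E}[T_n] \sim \pi n/\sqrt{\pi n}\cdot\sqrt{\pi n}$... I would recompute: if $\mathbb{E}[H_n]\sim \sqrt{\pi n}\cdot(2\sqrt{\pi n}) = 2\pi n$ --- no. Let me just say the normalization constants are chosen precisely so that $T_n \approx 2 H_n$ produces the stated law; the one genuine computation is to verify $\mathbb{E}[H_n]$ and $\mathbb{E}[T_n]$ match the two theorems, which follows from $\mathbb{E}[T_n] = 2\mathbb{E}[H_n] + O(1)$ and the known asymptotics of the height.

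For the convergence in distribution itself, I would invoke Slutsky's theorem: $T_n/(2\sqrt{\pi n}) = H_n/(2\sqrt{\pi n}) + R_n$ where $|R_n| \le 1/(2\sqrt{\pi n}) \to 0$ almost surely, so $T_n/(2\sqrt{\pi n})$ and $H_n/(2\sqrt{\pi n})$ have the same weak limit $\Theta$; equivalently $T_n/\sqrt{\pi n} \Rightarrow 2\Theta$, and a change of variables identifies $2\Theta$ with the Theta law under the renormalization $\frac{1}{\sqrt{\pi n}}$. Convergence of expectations additionally requires a uniform integrability argument for $H_n/\sqrt{n}$, which is classical for the height of conditioned Galton--Watson trees (moment bounds of Flajolet--Odlyzko type, already implicitly used in the threshold-1 theorem).

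The main obstacle is not analytic but combinatorial bookkeeping: rigorously establishing the deterministic inequality $2H_n - 1 \le T_n \le 2H_n$ by analyzing the pop-order of Algorithm~\ref{alg:pgw} with a size-$2$ buffer. One must argue that the two-slot data structure causes the processing order to interleave the two children of each node with the rest of the current level in a controlled way, so that the critical-path length is within $1$ of twice the depth --- essentially a BFS-with-width-2 traversal. Once that deterministic fact is in hand, everything else is a one-line consequence of the threshold-1 theorem plus elementary rescaling of the Theta distribution, which is why the authors correctly call it ``just a remark.''
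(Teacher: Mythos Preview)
Your approach is correct and is exactly the paper's: the entire argument there is the single sentence ``every node at level $h$ is treated after $2h-1$ or $2h$ operations,'' from which the theorem is stated as an immediate consequence of the threshold-1 result by doubling. You have in fact supplied more detail than the paper (Slutsky, uniform integrability, the explicit sandwich $2H_n-1\le T_n\le 2H_n$); your momentary tangle over the normalization constants is harmless and reflects genuine imprecision in how the theorem is phrased rather than any gap in your reasoning.
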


Notice that the algorithm with threshold 2 is theoretically slower than this with threshold. Nevertheless, our theoretical model does not take into account the fact that spawning a thread has a non negligible cost.

\subsection{Lifetime of the first thread in the case of threshold 1 and 2.}
\subsubsection{Threshold 1.}
We are going to mark the nodes which are treated by the first thread. Using standard approach by symbolic methods as presented in~\cite{Flajolet2009}, we get the following specification for the marked class of tree: 
$$\mathcal{T}^u = \mathcal{U}\mathcal{Z}+\mathcal{Z}\mathcal{U}\mathcal{T}^u\mathcal{T} \mbox{ and } \mathcal{T} = \mathcal{Z}+\mathcal{Z}\mathcal{T}^2$$

In other words, the thread processes nodes on the most left branch of the tree. The length of the left branch of a random tree is a classical problem in combinatorics. We give here a very fine analysis.

Let $T_{z,u}=\sum_{n,k}t_{n,k}z^nu^k$ be the bivariate generating function such that $t_{n,k}$ counts the number of trees of size $n$ having a first thread of size $k$. By classical dictionary from specifications to generating functions, we know that $T_{z,u}$ is given by the functional equation:
$$T_{z,u} = uz+zuT_{z,u}T(z).$$    

But $T(z)$ is nothing but the generation function of rooted binary trees, so $T(z)={\frac {1-\sqrt {1-4\,{z}^{2}}}{2z}}$. We directly deduce that $T_{z,u}={\frac {2uz}{2-u+u\sqrt {1-4\,{z}^{2}}}}.$ 

\noindent
Using guess and prove strategy, we can easily derive that when $n$ is odd $t_{n,k}= \displaystyle{{\frac {2(k-1)}{n-1}}{n-k-1\choose \frac{n-3}{2}}}$ and $t_{n,k}=0$ otherwise. This strategy proceeds as follows: firstly, calculate the first values of $t_{n,k}$, factorize them, and observe that no large divisors appear. Generally, this means that the values are product of factorials. Quite easily, we find that it should be of the shape $\displaystyle{{\frac {2(k-1)}{n-1}}{n-k-1\choose \frac{n-3}{2}}}$. Secondly, we just have to prove that our guessing is correct. For this, we extract a system of linear recurrences that allowed to build the $t_{n,k}$ from the function equation $T_{z,u} = uz+zuT_{z,u}T(z).$ Indeed, this equation is algebraic, so it is also holonomic and verifies the differential equations: 
$$\left\lbrace\begin{array}{l}
 \left( {u}^{2}{z}^{2}-8\,{z}^{2}u+8\,{z}^{2}+3\,u-3 \right) t \left( 
z,u \right) + \left( 4\,{u}^{2}{z}^{6}-{u}^{2}{z}^{4}-4\,u{z}^{4}+4\,{
z}^{4}+{z}^{2}u-{z}^{2} \right) {\frac {\partial ^{2}}{\partial {z}^{2
}}}t \left( z,u \right) + \left( 8\,{u}^{2}{z}^{5}-{u}^{2}{z}^{3}+8\,u
{z}^{3}-8\,{z}^{3}-3\,zu+3\,z \right) {\frac {\partial }{\partial z}}t
 \left( z,u \right) \\ 
t(z, u)+(2*u^2*z^2-4*u*z^2)*(diff(t(z, u), u))+(4*z^3-z)*(diff(t(z, u), z)) \\ 
\end{array}\right.   
$$
 and the coefficients follows a P-recurrence. 

We also extract a linear system of recurrences from the binomial expression: $$\left\lbrace\begin{array}{l}
\left( {k}^{2}-km-k \right) t_{2m+1,k} = \left( {k}
^{2}-2\,km-k+2\,m \right) t_{2m+1,k+1} \\ 
\left( {k}^{2}-4\,km+4\,{m}^{2}-3\,k+6\,m+2 \right) t_{2m+1
,k}=\left( {m}^{2}-km-k+3m+2 \right) t_{2m+3,k} \\ 
\end{array}\right.   
$$

  The last step is just to show that the two recurrences are equivalent.

  Now, to reach the mean lifetime of the first thread for a tree of size $n$, it suffices to observe that it corresponds to the value $M_n=\dfrac{[z^n]\dfrac{u\partial T_{z,u}}{\partial u}\vert_{u=1}}{[z^n]T(z)}$ \footnote{$[z^n]f(z)$ classically designs the coefficient of $z^n$ in the series $f(z)$}. Indeed, $[z^n]\dfrac{u\partial T_{z,u}}{\partial u}\vert_{u=1}=\sum_{n,k}k t_{n,k}z^nu^k.$ 
We have $\dfrac{\partial T_{z,u}}{\partial u}= \frac{4z}{(2-u+u\sqrt{1-4z^2})^2}$ and $\dfrac{\partial T_{z,u}}{\partial u}\vert_{u=1}={\frac {4z}{ \left( 1+\sqrt {1-4\,{z}^{2}} \right) ^{2}}}$.

To reach that $[z^n]\dfrac{\partial T_{z,u}}{\partial u}\vert_{u=1}= {2\frac {{n+1\choose n/2+1/2}}{n+3}}$, we deal with Lagrange inversion theorem~\cite{Flajolet2009} p. 732. More precisely, from the functional equation $g=z+2z^2g+z^3g^2$ followed by $\dfrac{\partial T_{z,u}}{\partial u}\vert_{u=1}$, putting $G=zg$, we get $G=z^2(1+G)^2$. So, putting $Z=z^2$, we get $G=Z(1+G)^2$, and we can directly apply Lagrange inversion theorem to yield $[Z^n]G= \frac{1}{n}{2n\choose n-1}$. The result on $[z^n]\dfrac{\partial T_{z,u}}{\partial u}\vert_{u=1}$ easily ensues.

 Moreover, $t_n=[z^n]T(z)$ is the number of binary trees that corresponds to Catalan numbers ${\frac {2{n-1\choose n/2-1/2}}{n+1}}.$ 
 
So, we first get that $M_n=\dfrac{4n}{n+3}$. 

Independently, from the exact expression of the coefficient, we also reach an exact formula for the distribution of the random variables  $L_n$ corresponding to the lifetime of the first thread in the process that return a random binary tree of size $n$. Indeed, we have $$\mathbb{P}(L_n=k)=\dfrac{t_{n,k}}{t_n}={\frac {\left( k-1 \right)  \left( n+1 \right) \sqrt {\pi }
\Gamma  \left( n-k \right) }{{2}^{n} \Gamma  \left( n/2 \right) \Gamma 
 \left( \frac{n+3}{2}-k \right) }}.
$$
Finally, using standard probabilistic approach, we obtain the limiting distribution of the $L_n$:
\begin{theorem}
Let $L_n$ be the random variable corresponding to the lifetime of the first thread in the process that return a random binary tree of size $n$ with threshold 1. Then, $\mathbb{E}(L_n)=\dfrac{4n}{n+3}$. Moreover the distribution $L_n$ converges in distribution to the random variable $X$ having $\dfrac{u^2}{(u-2)^2}$ as probability generating function.
\end{theorem}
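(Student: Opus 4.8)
The plan is to handle the two claims separately: the mean is essentially already computed above, and the limit law comes from singularity analysis of $T_{z,u}$ (or, alternatively, from the exact mass function).

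\textbf{Expectation.} We have noted that $\mathbb{E}(L_n)=M_n=[z^n]\frac{u\,\partial T_{z,u}}{\partial u}\big|_{u=1}\,\big/\,[z^n]T(z)$, together with $[z^n]\frac{\partial T_{z,u}}{\partial u}\big|_{u=1}=2\binom{n+1}{(n+1)/2}/(n+3)$ (Lagrange inversion) and $[z^n]T(z)=2\binom{n-1}{(n-1)/2}/(n+1)$ (Catalan numbers). Writing $n=2m+1$ and using $\binom{2m+2}{m+1}\big/\binom{2m}{m}=(2m+2)(2m+1)/(m+1)^{2}=2(2m+1)/(m+1)$, the ratio collapses to $M_n=\frac{4n}{n+3}$; in particular $\mathbb{E}(L_n)\to 4$, consistent with the mean of $X$.

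\textbf{Limit law.} Consider the probability generating function $\phi_n(u)=\mathbb{E}(u^{L_n})=[z^n]T_{z,u}\,\big/\,[z^n]T(z)$. For each fixed $u\in[0,1]$ the denominator of $T_{z,u}=\frac{2uz}{2-u+u\sqrt{1-4z^{2}}}$ is bounded away from $0$ on a $\Delta$-domain at $z=\pm\tfrac12$ (at $z=\pm\tfrac12$ it equals $2-u>0$), so the dominant singularities of $z\mapsto T_{z,u}$ are just the branch points $z=\pm\tfrac12$ of $\sqrt{1-4z^{2}}$. Near $z=\tfrac12$, $\sqrt{1-4z^{2}}=\sqrt2\,\sqrt{1-2z}\,(1+O(1-2z))$; substituting and expanding the resulting analytic function of $\sqrt{1-2z}$ gives
\begin{equation*}
T_{z,u}=\frac{u}{2-u}-\frac{\sqrt2\,u^{2}}{(2-u)^{2}}\,\sqrt{1-2z}+O(1-2z),
\end{equation*}
with the conjugate expansion at $z=-\tfrac12$ (both $T_{z,u}$ and $T(z)$ are odd in $z$, so only odd $n$ carry mass). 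The same computation at $u=1$, or directly from $T(z)=\frac{1-\sqrt{1-4z^{2}}}{2z}$, gives singular part $-\sqrt2\,\sqrt{1-2z}$. By the transfer theorem the two conjugate singularities contribute equal amounts (for odd $n$) to numerator and denominator, and the common $\sim 2^{n}n^{-3/2}$ factor cancels, so $[z^n]T_{z,u}\sim\frac{u^{2}}{(2-u)^{2}}[z^n]T(z)$ and hence $\phi_n(u)\to\frac{u^{2}}{(2-u)^{2}}=\frac{u^{2}}{(u-2)^{2}}$ for all $u\in[0,1]$. Since $\frac{u^{2}}{(u-2)^{2}}=\sum_{k\ge2}\frac{k-1}{2^{k}}u^{k}$ has nonnegative coefficients summing to $1$, it is a genuine p.g.f., and the continuity theorem for probability generating functions shows that $L_n$ converges in distribution to $X$. (Equivalently, one passes to the limit in the closed form $\mathbb{P}(L_n=k)=\frac{(k-1)(n+1)\sqrt\pi\,\Gamma(n-k)}{2^{n}\,\Gamma(n/2)\,\Gamma(\frac{n+3}{2}-k)}$: for each fixed $k$, Stirling's formula together with the duplication formula $\Gamma(\tfrac n2)\Gamma(\tfrac{n+1}2)=2^{1-n}\sqrt\pi\,\Gamma(n)$ gives $\mathbb{P}(L_n=k)\to\frac{k-1}{2^{k}}$, and pointwise convergence of mass functions whose limits sum to $1$ yields convergence in distribution.)

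The only genuinely delicate step is the singularity-analysis bookkeeping: verifying that the denominator contributes no spurious singularity for $u$ in the relevant range, and correctly tracking the two conjugate square-root singularities forced by the parity of $n$. Everything else is already in hand above or a one-line Stirling estimate.
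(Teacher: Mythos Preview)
Your argument is correct and follows essentially the same line as the paper's: singularity analysis of $T_{z,u}$ at the dominant square-root branch points, followed by a continuity theorem. The only cosmetic difference is that the paper works with the characteristic function $\phi_n(t)=\mathbb{E}(e^{itL_n})$ (i.e.\ it substitutes $u=e^{it}$) and invokes L\'evy's continuity theorem, whereas you work directly with the probability generating function $\phi_n(u)=\mathbb{E}(u^{L_n})$ for $u\in[0,1]$ and use the p.g.f.\ continuity theorem; these are equivalent here since the limit law is supported on $\mathbb{N}$. Your parenthetical alternative---passing to the limit in the exact mass function via Stirling and the duplication formula to obtain $\mathbb{P}(L_n=k)\to(k-1)2^{-k}$, then invoking Scheff\'e's lemma---is a genuinely more elementary route that bypasses transfer theorems entirely, at the cost of relying on the closed form for $t_{n,k}$ established just above.
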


\begin{proof}
Consider the characteristic function $\phi_n(t)=\mathbb{E}(e^{itL_n})$, using classical Flajolet-Odlysko transfer theorems~\cite{FO90}, we obtain that $$\phi_n(t)={\frac {  {{\rm e}^{2it}} }{ \left( {{\rm e}^{it}}-2 \right) ^{2}}}-12{\frac { \left( {{\rm e}^{it}}-1 \right)   {{\rm e}^{2it}}}{n \left( {{\rm e}^{it}}-2 \right) ^{4}}}+O\left( {n}^{-3/2} \right).$$
So, for every $t\in \mathbb{R}$, $\phi_n(t)$ converges pointwise to $\phi(t)= \frac {{{\rm e}^{2it}} }{ \left( {{\rm e}^{it}}-2 \right) ^{2}}$, by L\'evy's continuity theorem, this implies that $L_n$ converges in distribution to the random variable $X$ having $\dfrac{u^2}{(u-2)^2}$ as probability generating function.
\end{proof}

\subsubsection{Threshold 2.}
We are going to mark the nodes which are treated by the first thread. Using standard approach by symbolic methods, we get the following specification for the marked class of tree: $$\mathcal{T}^u = \mathcal{U}\mathcal{Z}+\mathcal{U}^3\mathcal{Z}^3+\mathcal{Z}^2\mathcal{U}^2\mathcal{T}^u_{>1}+\mathcal{Z}\mathcal{U}\mathcal{T}^u_{>1}\mathcal{Z}\mathcal{U}
+\mathcal{Z}\mathcal{U}\mathcal{T}^u_{>1}\mathcal{U}\mathcal{T}_{>1}.$$
Indeed, A marked tree in $\mathcal{T}^u$ can be recursively build as follows: if its size is 1 or 3, in this case, all the node are treated by the first thread, this corresponds to the 2 first terms in the specification. Otherwise, we have 3 possible cases, the root of the tree has 2 sons $L$ and $R$, and $|L|=1$, $|R|>1$ or $|L|>1$, $|R|=1$ or $|L|>1$, $|R|>1$ (where $|R|$ designs the size). In the both first cases, we mark the root, the singleton subtree. The remainder subtree belongs to $\mathcal{T}^u$. In the last case, we mark the root, the left subtree $L$ is in $\mathcal{T}^u$, and the right subtree $R$ is unmarked expect its root.

Now, directly from the specification, we get that $T_{z,u}=T(z,u)$ (and $T_z=T(z)$) is given by the functional equation:
$$T_{z,u}= uz+u^3z^3+2z^2u^2(T_{z,u}-uz)+zu^2(T_{z,u}-uz)(T_z-z).$$ We directly deduce that $T_{z,u}={\frac {uz \left( 2-{u}^{2}+{u}^{2}\sqrt {1-4\,{z}^{2}} \right) }{2-2
\,{z}^{2}{u}^{2}-{u}^{2}+{u}^{2}\sqrt {1-4\,{z}^{2}}}}.$ We then can easily derive that $t_{n,k}=\displaystyle{\sum_{j=0}^{\frac{k-3}{2}} {\frac {j{\frac{k-3}{2}\choose j}{n-k+j\choose \frac{n-k}{2}}}{n-k+j}}}$. 
%Now, to reach the mean life of the first thread for a tree of size $n$, it suffices to observe that it corresponds to the value $M=\dfrac{[z^n]\dfrac{\partial T_{z,u}}{\partial u}}{[z^n]T_{z,u}}\vert_{u=1}.$
 Again, using Odlysko-Flajolet transfer theorems, from the fact that $[z^n]\dfrac{\partial T_{z,u}}{\partial u}\vert_{u=1}=[z^n]{\frac {2z \left( 1+{z}^{2}+\sqrt {1-4\,{z}^{2}}-{z}^{2}\sqrt {1-4\,
{z}^{2}} \right) }{ \left( 1-2\,{z}^{2}+\sqrt {1-4\,{z}^{2}} \right) ^
{2}}},$ we reach that $$[z^n]\dfrac{\partial T_{z,u}}{\partial u}\vert_{u=1}= 17/2\,{\frac {\sqrt {2} \left( 1-{{\rm e}^{i\pi \,n}} \right) {2}^{n}}{
{n}^{3/2}\sqrt {\pi }}}
%-627/8\,{\frac {\sqrt {2} \left( 1-{{\rm e}^{i
%\pi \,n}} \right) {2}^{n}}{{n}^{5/2}\sqrt {\pi }}}
+O \left( {n}^{-5/2}
 \right)
.$$ In fact, with a more technical study, we can show that $M_n=\dfrac{[z^n]\dfrac{\partial T_{z,u}}{\partial u}}{[z^n]T_{z,u}}\vert_{u=1}=\dfrac{17n^2-8n+15}{n^2+8n+15}$ 
%Moreover, $[z^n]T(z,1)=1/2\,{\frac {\sqrt {2} \left( 1-{{\rm e}^{i\pi \,n}} \right) {2}^{n}}{
%{n}^{3/2}\sqrt {\pi }}}-3/8\,{\frac {\sqrt {2} \left( 1-{{\rm e}^{i
%\pi \,n}} \right) {2}^{n}}{{n}^{5/2}\sqrt {\pi }}}+O \left( {n}^{-7/2}
% \right)$

\begin{theorem}
Let $L_n$ be the random variable corresponding to the lifetime of the first thread in the process that return a random binary tree of size $n$ with threshold 2. Then, $\mathbb{E}(L_n)=\dfrac{17n^2-8n+15}{n^2+8n+15}$. Moreover the distribution $L_n$ converges in distribution to the random variable $X$ having ${\frac {{u}^{5}}{ \left( 3\,{u}^{2}-4 \right) ^{2}}}$ as probability generating function.
\end{theorem}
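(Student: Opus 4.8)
The plan is to follow the threshold-$1$ argument in structure: read the exact mean off the bivariate generating function, and obtain the limit law through the characteristic function and L\'evy's continuity theorem. Since $\mathbb{P}(L_n=k)=t_{n,k}/t_n$ with $t_n=[z^n]T(z)=\tfrac{2}{n+1}\binom{n-1}{(n-1)/2}$ the Catalan number, we have, for odd $n$ (binary trees of even size being empty),
\[ \phi_n(t)=\mathbb{E}\!\left(e^{itL_n}\right)=\frac{[z^n]T(z,e^{it})}{[z^n]T(z)}, \]
and the closed form $T_{z,u}=\dfrac{uz\,(2-u^{2}+u^{2}\sqrt{1-4z^{2}})}{2-2z^{2}u^{2}-u^{2}+u^{2}\sqrt{1-4z^{2}}}$ is exactly what makes this ratio amenable to singularity analysis.

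First I would pin down the singular structure of $z\mapsto T(z,e^{it})$. Writing $\delta=\sqrt{1-4z^{2}}$, the only branch points are $z=\pm\tfrac12$; any other singularity would be a pole at a zero of the denominator, but the elementary bound $\bigl|[z^n]T(z,e^{it})\bigr|=\bigl|\sum_k t_{n,k}e^{itk}\bigr|\le\sum_k t_{n,k}=[z^n]T(z)=\Theta(2^{n}n^{-3/2})$ forces the radius of convergence to be exactly $\tfrac12$ and excludes a pole on $|z|=\tfrac12$ as well (a pole would contribute a term of order $2^{n}$). Hence the dominant singularities are precisely $z=\pm\tfrac12$, both of square-root type, and since $T(z,u)$ and $T(z)$ are odd in $z$, the $z=-\tfrac12$ contribution combines with the $z=\tfrac12$ one into the parity factor $1-e^{i\pi n}$ already visible in the excerpt, and cancels in the ratio. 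I would then expand near $z=\tfrac12$ via $\dfrac{A+B\delta}{C+E\delta}=\dfrac{A}{C}+\dfrac{BC-AE}{C^{2}}\,\delta+O(\delta^{2})$ with $A=uz(2-u^{2})$, $B=u^{3}z$, $C=2-2z^{2}u^{2}-u^{2}$, $E=u^{2}$, all analytic at $z=\tfrac12$; evaluating the singular coefficient at $z=\tfrac12$ gives $-\,u^{5}/(4-3u^{2})^{2}$ (and $4-3e^{2it}\neq0$ on the unit circle), against $-1$ for $T(z)=\tfrac{1-\delta}{2z}$. The Flajolet--Odlyzko transfer theorem then yields, for odd $n\to\infty$,
\[ \phi_n(t)\;\longrightarrow\;\frac{e^{5it}}{(4-3e^{2it})^{2}}=\left.\frac{u^{5}}{(3u^{2}-4)^{2}}\right|_{u=e^{it}}, \]
the factor $\sqrt2$ from $\delta\sim\sqrt2\sqrt{1-2z}$ cancelling in numerator and denominator. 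As $\dfrac{u^{5}}{(3u^{2}-4)^{2}}=\dfrac1{16}\sum_{m\ge0}(m+1)(3/4)^{m}u^{2m+5}$ has nonnegative coefficients summing to $1$, the limit is a genuine probability generating function, hence continuous at $t=0$, and L\'evy's continuity theorem gives $L_n\to X$ in distribution with $X$ of p.g.f.\ $\dfrac{u^{5}}{(3u^{2}-4)^{2}}$.

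For the expectation I would use $\mathbb{E}(L_n)=[z^n]\bigl(\partial_u T_{z,u}\bigr)\big|_{u=1}\big/[z^n]T(z)$ together with the algebraic closed form $\bigl(\partial_u T_{z,u}\bigr)\big|_{u=1}=\dfrac{2z(1+z^{2}+\sqrt{1-4z^{2}}-z^{2}\sqrt{1-4z^{2}})}{(1-2z^{2}+\sqrt{1-4z^{2}})^{2}}$ recorded above. As in the threshold-$1$ case this function is algebraic, so one extracts an exact closed form for $[z^n]\bigl(\partial_u T_{z,u}\bigr)\big|_{u=1}$ --- e.g.\ by the Lagrange-inversion substitution $Z=z^{2}$, or by holonomic/partial-fraction manipulation --- divides by the Catalan number and simplifies, the ratio collapsing to the rational function $\dfrac{17n^{2}-8n+15}{n^{2}+8n+15}$. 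A consistency check: this tends to $17=\mathbb{E}(X)=\dfrac{d}{du}\dfrac{u^{5}}{(3u^{2}-4)^{2}}\big|_{u=1}$, in agreement with the already-quoted asymptotics $[z^n]\bigl(\partial_u T_{z,u}\bigr)\big|_{u=1}\sim\tfrac{17}{2}\,\tfrac{\sqrt2(1-e^{i\pi n})2^{n}}{n^{3/2}\sqrt\pi}$, which is $17$ times the leading term of $[z^n]T(z)$.

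I expect the main obstacle to be the exact evaluation behind $\mathbb{E}(L_n)$: because the denominator of $T_{z,u}$ is squared (unlike in the threshold-$1$ case), the coefficient extraction for $[z^n]\bigl(\partial_u T_{z,u}\bigr)\big|_{u=1}$ and its reduction to a rational function of $n$ is noticeably heavier --- this is the ``more technical study'' alluded to in the text. The convergence-in-distribution half, by contrast, is a routine (if careful) singularity-analysis argument once the location and square-root nature of the dominant singularities have been justified as above.
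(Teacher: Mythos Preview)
Your proposal is correct and follows essentially the same route as the paper: compute the characteristic function as $[z^n]T(z,e^{it})/[z^n]T(z)$, apply Flajolet--Odlyzko transfer at the square-root singularities $z=\pm\tfrac12$ to get the pointwise limit $e^{5it}/(3e^{2it}-4)^2$, and conclude by L\'evy's continuity theorem; for the mean, extract $[z^n]\partial_uT_{z,u}|_{u=1}$ from the algebraic closed form and divide by the Catalan number. You are in fact more explicit than the paper in justifying that no spurious pole appears on $|z|=\tfrac12$ (via the coefficient bound $|[z^n]T(z,e^{it})|\le[z^n]T(z)$) and in checking that the limiting p.g.f.\ is genuine, while, like the paper, you leave the exact rational form of $\mathbb{E}(L_n)$ to a ``more technical'' computation and a consistency check.
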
 
\begin{proof}
Again, considering the characteristic function $\phi_n(t)=\mathbb{E}(e^{itL_n})$, and using classical Flajolet-Odlysko transfer theorems, we obtain that $$\phi_n(t)={\frac {  {{\rm e}^{5it}} }{ \left( 3\,  {
{\rm e}^{2it}} -4 \right) ^{2}}}+{\frac { 24 
 \left( {{\rm e}^{4it}} -5  {{\rm e}^{2it}}  
+4 \right)   {{\rm e}^{5it}} }{n \left( 3
  {{\rm e}^{2it}} -4 \right) ^{4}}}+O \left( {n}^{-2}
 \right).$$
So, for every $t\in \mathbb{R}$, $\phi_n(t)$ converges pointwise to $\phi(t)=\frac {  {{\rm e}^{5it}} }{ \left( 3  {
{\rm e}^{2it}}  -4 \right) ^{2}}$, by L\'evy's continuity theorem, this implies that $L_n$ converges in distribution to the random variable $X$ having $\dfrac{u^5}{(3u^2-4)^2}$ as probability generating function.
\end{proof}
% Finally, the mean life of the first thread is asymptotically finite and equal to 17. Note that, with a little more attention, we can prove that the random variable associated to the life of the first thread is normally distributed (it suffices to observe that the specification is irreducible and aperiodic, and the Gaussian limit follows directly from the quasi-powers theorem.)
% 

\begin{figure}[h!]
\includegraphics[width=4cm, bb=0 0 300 300]{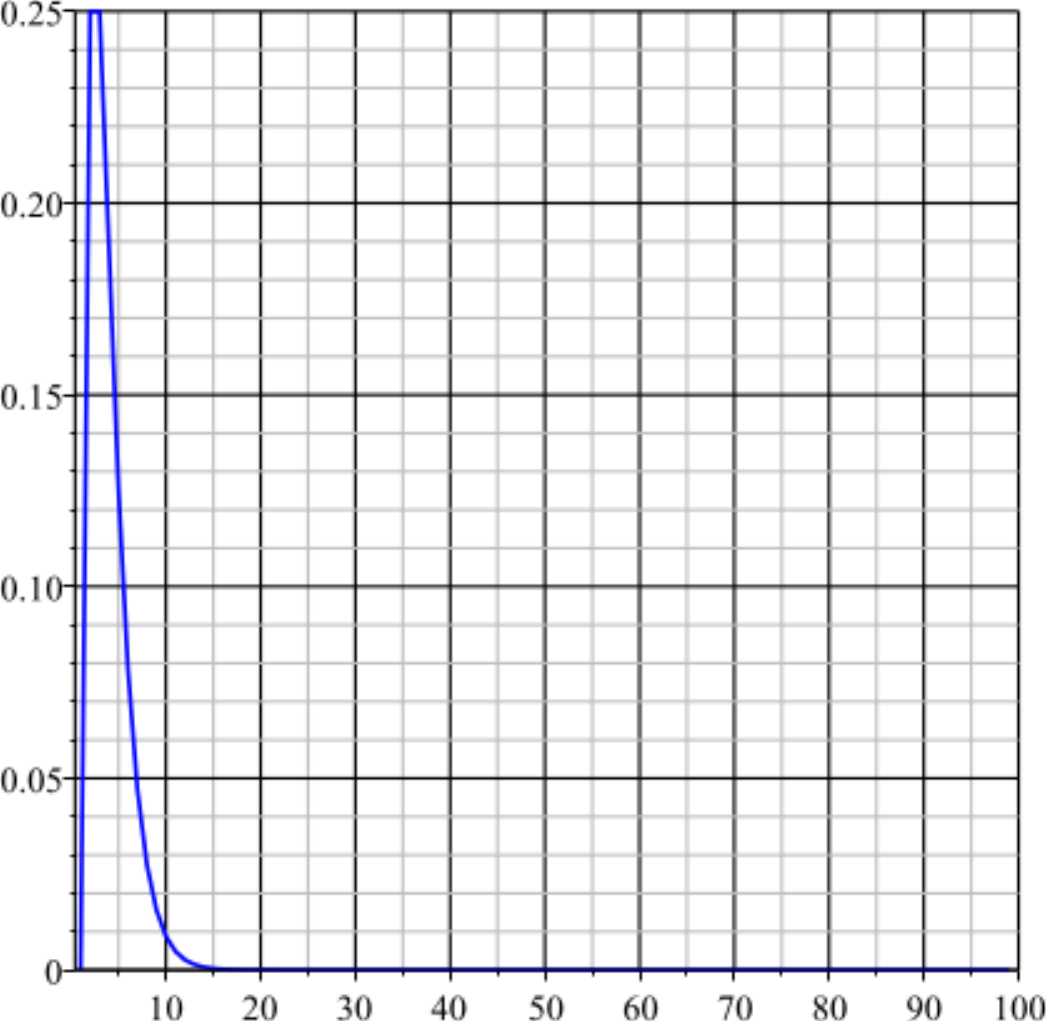}\hfill
\includegraphics[width=4cm, bb=0 0 300 300]{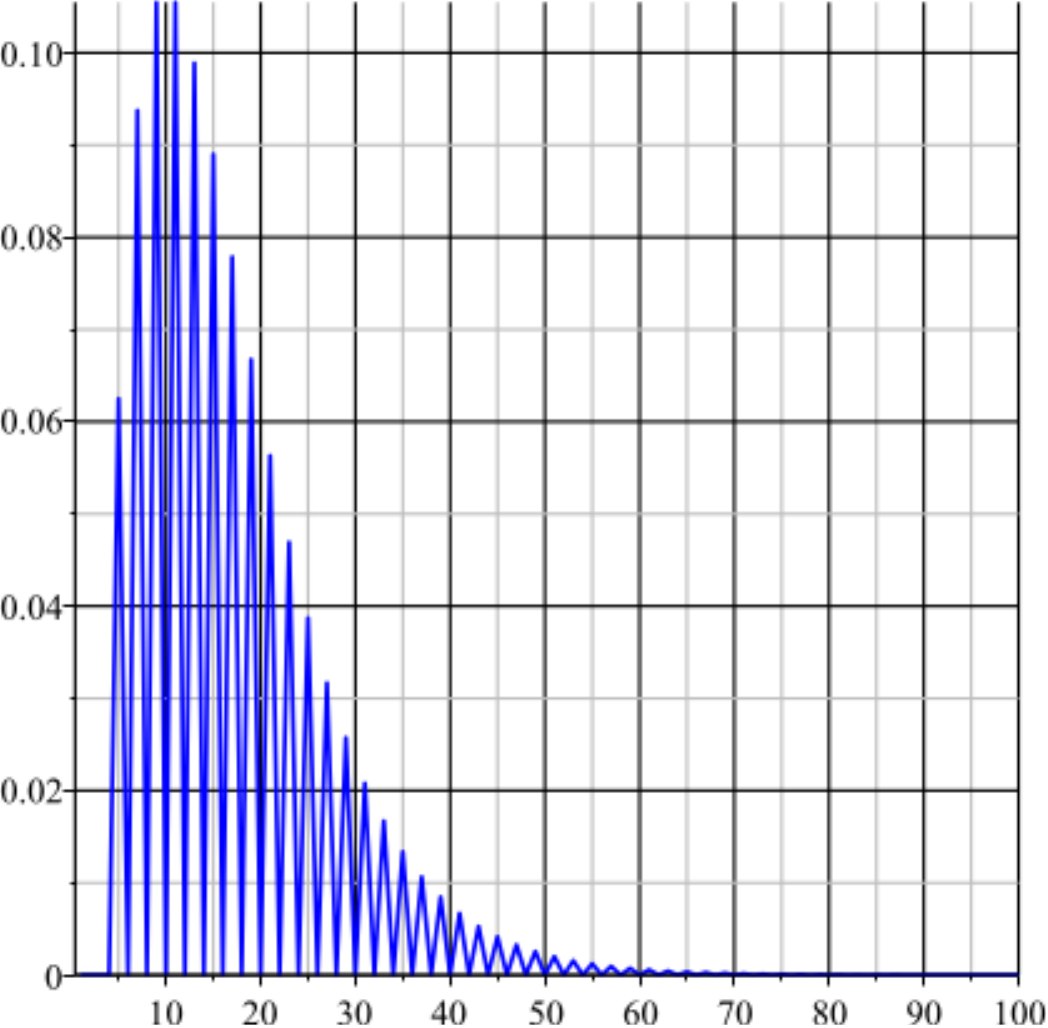}
\caption{Limiting distribution of lifetime with threshold 1 and 2}\label{fig:somefiglabel}
\end{figure}

Note to conclude this section that for every threshold the expected lifetime of the first thread is always finite (in a sense where it admits a finite limit when the size tends to the infinity). This is based on the universal shape in $(1-z/\rho)^{1/2}$ of the dominant singularity of the generating function of the mean lifetime. Nevertheless, due to the increasing complexity of the specifications, the calculation is more and more tricky and becomes humanly intractable for threshold greater than 8. In order to give an intuition of the next step, we propose without explanation the specification for threshold 4:
\begin{tiny}

$$\mathcal{T}^u={\mathcal{Z}}^{7}{\mathcal{U}}^{7}({\mathcal{T}}^{4}R_4+4 R_2+
6R_4+4R_6)
+{\mathcal{Z}}^{5}{\mathcal{U}}^{5}(4{R_2}+2{R_4})+\mathcal{X}$$
\end{tiny}
where
\begin{tiny}
$$\mathcal{X}={\mathcal{Z}}^{7}{\mathcal{U}}^{7}+2{\mathcal{Z}}^{5}{\mathcal{U}}
^{5}+{\mathcal{Z}}^{3}{\mathcal{U}}^{3}+\mathcal{Z}\mathcal{U}$$
$${R_2}={\mathcal{Z}}^{2}{\mathcal{U}}^{2} \left( 1+2{R_2}+{
R_4} \right)$$
$${R_4}={\mathcal{Z}}^{4}{\mathcal{U}}^{4} \left( {\mathcal{T}}^{4}{R_4}+4{
R_2}+6\,{R_4}+4{R_6}+1 \right)$$
\end{tiny}
and
\begin{tiny}
\begin{multline*} 
{R_6}=R_4\mathcal{T}^8\mathcal{U}^2\mathcal{Z}^6+2R_4\mathcal{T}^6\mathcal{U}^4\mathcal{Z}^6+\\
5R_4\mathcal{T}^4\mathcal{U}^6\mathcal{Z}^6+4R_4\mathcal{T}^6\mathcal{U}^2\mathcal{Z}^6+4R_4\mathcal{T}^4\mathcal{U}^4\mathcal{Z}^6+\\
6R_2\mathcal{U}^6\mathcal{Z}^6+6R_4\mathcal{T}^4\mathcal{U}^2\mathcal{Z}^6+2R_4\mathcal{T}^2\mathcal{U}^4\mathcal{Z}^6+\\
14R_4\mathcal{U}^6\mathcal{Z}^6+14R_6\mathcal{U}^6\mathcal{Z}^6+\mathcal{U}^6\mathcal{Z}^6+4R_4\mathcal{T}^2\mathcal{U}^2\mathcal{Z}^6+R_4\mathcal{U}^2\mathcal{Z}^6
\end{multline*}
\end{tiny}
By the same approach than for threhold 1 and 2, we then prove that 
\begin{theorem}
Let $L_n$ be the random variable corresponding to the lifetime of the first thread in the process that return a random binary tree of size $n$ with threshold 4. Then, the random variable $L_n$ converges in distribution to the random variable $X$ having $-{\frac {{u}^{11}}{ \left( {u}^{4}-16\,{u}^{2}+16 \right)  \left( {u}^
{6}-18\,{u}^{4}+48\,{u}^{2}-32 \right) }}$
 as probability generating function. In particular, the mean lifetime is asymptotically 69.
\end{theorem}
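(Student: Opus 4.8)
The plan is to run, essentially verbatim, the three–step pipeline already used for thresholds $1$ and $2$; the novelty at threshold $4$ is only the size of the algebraic system, not a new idea. \emph{Step 1 (from specification to an explicit $T_{z,u}$).} Applying the symbolic–method dictionary to the displayed specification of $\mathcal{T}^u$, together with the auxiliary classes $R_2,R_4,R_6$ and the plain binary–tree class $\mathcal{T}$ (for which $T(z)=(1-\sqrt{1-4z^2})/(2z)$ is known), turns it into a polynomial system: $R_2=z^2u^2(1+2R_2+R_4)$, $R_4=z^4u^4(T^4R_4+4R_2+6R_4+4R_6+1)$, the displayed equation for $R_6$, and the top equation for $T_{z,u}$. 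Each $R_i$ occurs linearly, so once $T$ is regarded as a known algebraic function the elimination of $R_2,R_4,R_6$ is a linear solve and produces $T_{z,u}$ explicitly as an algebraic function of $z$ and $u$ of the form $N\!\left(z,u,\sqrt{1-4z^2}\right)/D\!\left(z,u,\sqrt{1-4z^2}\right)$. This elimination, and the companion ``guess and prove'' derivation of the displayed closed form for $t_{n,k}$ (guessed from the first coefficients, then certified against the $P$-recurrence that follows from the holonomic functional equation, exactly as in the threshold-$1$ case), are mechanical but heavy and are carried out with a computer algebra system.

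\emph{Step 2 (the mean).} Differentiating the functional equation in $u$ and setting $u=1$ gives a linear equation for $G(z):=\partial_u T_{z,u}\vert_{u=1}$; since $T(z)$ has a square-root singularity at $\rho=1/2$, $G$ inherits a singular expansion of the universal shape $a(z)+b(z)\sqrt{1-4z^2}$ near $z=\pm 1/2$ with $a,b$ analytic there, so the Flajolet--Odlyzko transfer theorems yield $[z^n]G(z)\sim c\,2^{n}n^{-3/2}$ along the relevant parity class of $n$ (the parity restriction is what produces the $1-e^{i\pi n}$ factor seen at threshold $2$). Dividing by $[z^n]T(z)=t_n$, which is of the same growth order (Catalan asymptotics), gives $M_n=\mathbb{E}(L_n)$ as a ratio tending to a finite limit; that limit is $p'(1)$ where $p$ is the claimed limiting probability generating function, and a short computation gives $p'(1)=69$, confirming the asymptotic mean.

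\emph{Step 3 (the limiting distribution).} Put $u=e^{it}$ and form the characteristic function $\phi_n(t)=\mathbb{E}(e^{itL_n})=[z^n]T_{z,u}\vert_{u=e^{it}}\big/[z^n]T(z)$. Near $z=\pm 1/2$ the function $z\mapsto T_{z,e^{it}}$ has, as in Step 1, a singular part of the form $\beta(e^{it})\sqrt{1-4z^2}+\text{analytic}$, with $\beta$ depending analytically on $e^{it}$; transfer applied uniformly for $t$ in a neighbourhood of $0$ then gives $\phi_n(t)=\phi(t)+O(n^{-1})$ with $$\phi(t)=-\frac{e^{11it}}{(e^{4it}-16e^{2it}+16)(e^{6it}-18e^{4it}+48e^{2it}-32)}.$$ Since $\phi_n$ converges pointwise to the continuous function $\phi$, L\'evy's continuity theorem gives convergence in distribution of $L_n$ to the random variable $X$ whose probability generating function is $-u^{11}/\big((u^4-16u^2+16)(u^6-18u^4+48u^2-32)\big)$; combined with Step 2 this finishes the statement.

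\emph{Main obstacle.} It is computational bookkeeping rather than a missing idea. The error-prone parts are the elimination producing $T_{z,u}$ in closed form and the certification of the formula for $t_{n,k}$; and, in the singular expansions, keeping track of which branch of each square root is dominant and collecting the coefficient $b(z)$ (resp.\ $\beta(e^{it})$) with the correct sign, so that the limiting pgf and the $O(n^{-1})$ correction come out right. The one genuinely substantive point is verifying that $z=1/2$ remains the \emph{unique} dominant singularity of $z\mapsto T_{z,e^{it}}$ (and of $G$) for all small real $t$, that is, that no root of the denominator $D$ of $T_{z,u}$ enters the closed disc $|z|\le 1/2$ as $u$ moves near $1$ on the unit circle; a stray pole there would invalidate the transfer. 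This is settled by inspecting the $z$-roots of $D\!\left(z,e^{it},\sqrt{1-4z^2}\right)$ for $u$ in a neighbourhood of $1$ and confirming they stay bounded away from $|z|=1/2$.
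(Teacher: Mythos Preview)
Your proposal is correct and follows essentially the same approach as the paper, which for threshold~4 merely states ``by the same approach than for threshold 1 and 2, we then prove that'' before giving the theorem. You have spelled out that approach---solve the (linear in $R_2,R_4,R_6$) algebraic system coming from the specification, extract the mean via $\partial_u T_{z,u}\vert_{u=1}$ and transfer, and obtain the limiting law via the characteristic function and L\'evy's continuity theorem---with more care (in particular the singularity-location check) than the paper itself provides.
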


\section{Implementation details}\label{sec:implem}

%\input{implem}

% Presentation parallelisme par tache (c'est toi la tache !)

We implemented this parallel algorithm using the task-oriented Intel
Cilk framework\cite{cilk532}. This model is
particularly well suited for recursive processes such as the Galton
Watson process. Therefore, we implemented the program in C++.

%\subsection{Recursive algorithm, iterative implementation}

\subsection{Memory management}\label{sec:algo:mem}

One of the key issues to ensure efficiency is to avoid different threads to access the same memory zone.
Mutual exclusion is necessary to avoid race conditions and false
sharing. Race conditions happen when multiple threads are accessing
the same memory zone, with at least one of them in write
mode. They can be solved by using locks and mutexes, which are
expensive in terms of computational cost. False sharing happens when
several threads are trying to access areas of memory that distinct but
located on the same cache line. In this case, the operating system
sequentializes the memory accesses, harming the parallel
performance. False sharing can be avoided by using padding in order to
make sure that variables that are accessed by different threads are
far enough from each other.

In our implementation, most of the times nodes of the tree are only accessed by the thread that created them. 
A node contains the address of its two children. The children of a
node are necessarily created by the same thread and therefore are adjacent in memory.
We pre-allocate a memory block to each thread corresponding to future nodes. Once the block of a thread is filled, we create a new block and append
it to the previous one, in the manner of an linked list.
As a consequence, each thread has its own area of memory which is seldom
accessed by other threads. 
The only times a node can be accessed by another thread is when a linear data structure of nodes is passed from a thread $t_1$ to a thread $t_2$.
In this case, thread $t_2$ will access nodes in the linear data structures which are stored in thread $t_1$ memory space.
Though, this does not happen often and $t_1$ do not access those nodes. 
Therefore, there is few concurrency on memory accesses, and the implementation of these memory blocks makes
sure that nodes located in different memory blocks are far enough from
each other in memory not to be stored on the same cache line.
A program can generate several trees in a row reusing the memory allocated for the previous ones.

\subsection{Memory management}
As stated in section \ref{sec:algo:mem}, we implemented separate
memory blocks for each thread. Each thread allocates its own memory
blocks and accesses it. 
Moreover, to reduce the number of
 memory allocations (which are expensive system calls), we used a mass
 allocation strategy. Space for a certain number of nodes is allocated
 at once as a table and nodes from this table are used when
 necessary. 
 The global tree is represented across these blocks by pointers between
 nodes. When a node is created on a thread $T_i$ called from another
 thread $T_j$, the parent node on $T_j$ stores a pointer to the node
 on $T_i$; as a consequence, no thread performs any data access on
 another thread's memory blocks.

\subsection{Random Number Generation}

In order to obtain random bits, we had to find a pseudorandom number generator which would:
\vspace{-.3em}
\begin{enumerate}[noitemsep] 
\item
be as fast as possible,
\item
waste as little random bit as possible.
\item
have a huge period,
\item
be thread-safe,
\end{enumerate}

The naive way to draw a random bit in C/C++ would be to use the {\tt rand()\%2} instruction.
Though is technique is easy to implement, it is completely inefficient: the {\tt rand} function
is rather slow (one of its step is a huge multiplication), it is neither reentrant nor thread safe, 
its period is $2^{32}$ (which is insufficient when generating several huge objects) and finally,
this technique waste $31$ random bits since only the last one is kept.

%The following described how we solved those problems.

\paragraph{Properties $3$ and $4$}
Since the classical {\tt rand} function is not adapted, we used a pseudo random number generator from the {\tt Boost} C++ libraries: the Mersenne Twister.% algorithm.
We used this method to generate pseudo uniform 32 bits integers.
Its period is $2^{19937}$ and using a different generator in each thread is sufficient to guarantee it is thread-safe.

\paragraph{Properties $1$ and $2$}

In order to avoid wasting random bits and accelerate the computation, we used the pseudo random integer generator to obtain a buffer of random bits.
Therefore, there is no wasted random bits, except the ones that were left unused in the buffer at the end of the program, which is negligible.
Since the random integer function is called $32$ times less, this considerably accelerates the computation.
Preserving a buffer from a call of a thread to another requires it to be stored as a global variable.
When the program starts, a tabular of buffers is generated. The number of buffers is equal to the number of threads.
Since this is a tabular which will be accessed by all the thread, we
used padding to avoid false sharing between the random number
generators. Therefore, the size of the buffers is related to the size of a cache line.

\bibliographystyle{plain}
\bibliography{gene}

\end{document}